\newif\if@restonecol
\newif\if@restonecol
\theoremstyle{plain}
\newtheorem{thm}{Theorem}[section]
\newtheorem{lemm}{Lemma}
\theoremstyle{plain}
\newtheorem{coro}{Corollary}
\begin{document}
\title{Iteratively Weighted MMSE Uplink Precoding for Cell-Free Massive MIMO}
\author{Zhe~Wang$^\ast$, Jiayi~Zhang$^\ast$, Hien Quoc Ngo$^\star$, Bo~Ai$^\dag$, and M{\'e}rouane~Debbah$^\xi$$^\ddag$\\
{\small $^\ast$School of Electronic and Information Engineering, Beijing Jiaotong University, Beijing 100044, China.}\\
{\small $^\star$Institute of Electronics, Communications and Information Technology (ECIT), Queen’s University Belfast, U.K.}\\
{\small $^\dag$State Key Laboratory of Rail Traffic Control and Safety, Beijing Jiaotong University, Beijing 100044, China.}\\
{\small $^\xi$Mohamed Ben Zayed University on Artificial Intelligence, Masdar City, Abu Dhabi, United Arab Emirates.}\\
{\small $^\ddag$Technology Innovation Institute, Masdar City, Abu Dhabi, United Arab Emirates.}\\
}

\maketitle


\begin{abstract}
In this paper, we investigate a cell-free massive MIMO system with both access points and user equipments equipped with multiple antennas over the Weichselberger Rayleigh fading channel. We study the uplink spectral efficiency (SE) based on a two-layer decoding structure with maximum ratio (MR) or local minimum mean-square error (MMSE) combining applied in the first layer and optimal large-scale fading decoding method implemented in the second layer, respectively. To maximize the weighted sum SE, an uplink precoding structure based on an Iteratively Weighted sum-MMSE (I-WMMSE) algorithm using only channel statistics is proposed. Furthermore, with MR combining applied in the first layer, we derive novel achievable SE expressions and optimal precoding structures in closed-form. Numerical results validate our proposed results and show that the I-WMMSE precoding can achieve excellent sum SE performance.
\end{abstract}

\IEEEpeerreviewmaketitle
\vspace{-0.3cm}
\section{Introduction}

As a promising technology for future wireless communication, cell-free massive MIMO (CF mMIMO) has been widely investigated to achieve uniform spectral efficiency (SE) to user equipments (UEs) and improve macro-diversity \cite{7827017,9113273,[162],9174860}. In CF mMIMO networks, a large number of access points (APs), arbitrarily distributed in a wide coverage area and connected to a central processing unit (CPU), jointly serve all UEs on the same time-frequency resource. Thanks to the prominent network topology of CF mMIMO, four signal processing structures, distinguished from levels of the mutual cooperation between all APs and the assistance from the CPU, can be implemented as \cite{[162]}. Among these signal processing structures, a two-layer decoding structure is considered as an efficient decoding technique \cite{7869024,8809413,9276421}. In the first layer, each AP estimates channels and decodes the UE data locally by applying an arbitrary combining scheme based on the local channel state information (CSI). In the second layer, all the local estimates of the UE data are gathered at the CPU where they are linearly weighted by the optimal large-scale fading decoding (LSFD) coefficient to obtain the final decoding data.

The vast majority of scientific papers on CF mMIMO focus on the scenario with single-antenna UEs. However, contemporary UEs with moderate physical sizes have already been equipped with multiple antennas so it is necessary to investigate the performance of CF mMIMO systems with multi-antenna UEs. Recent works like \cite{113,8901451,194,9424703,04962} have evaluated the scenario with multi-antenna UEs in CF mMIMO systems. But all these works are based on the assumption of independent and identically distributed (i.i.d.) Rayleigh fading channels, neglecting the spatial correlation that exists in any practical channel \cite{8809413,9276421}. A practical channel model for the scenario with multi-antenna UEs is the jointly-correlated Weichselberger model \cite{1576533,9148706}. Unlike the classic Kronecker channel, which models the spatial correlation properties at the AP-side and UE-side separately and neglects the joint correlation feature for each AP-UE pair \cite{ozcelik2003deficiencies}, the Weichselberger model not only considers the correlation features at both the AP-side and UE-side but models the joint correlation dependence between each AP-UE pair.

When UEs are equipped with multiple antennas, the uplink (UL) precoding structure can be designed to further improve the performance of systems. One popular optimization objective is to maximize the weighted sum rate (WSR). The authors in \cite{5756489} and \cite{4712693} showed the equivalence between the WSR maximization problem and the Weighted sum-Minimum Mean Square Error (WMMSE) problem in MIMO systems and propose an iteratively downlink transceiver design algorithm, which is based on iterative minimization of weighted MSE, for the WSR maximization. The authors in \cite{6302110} investigated the UL precoding structure optimization based on the iterative optimization since the WMMSE problem are not jointly convex over all optimization variables.

Motivated by the above observations, we investigate a CF mMIMO system with both multi-antenna APs and UEs over the Weichselberger channel, where a two-layer decoding structure is implemented with maximum ratio (MR) or local MMSE (L-MMSE) combining in each AP (the first layer) and the LSFD method in the CPU (the second layer). Then, an UL precoding structure based on an iteratively WMMSE (I-WMMSE) algorithm with only channel statistics is proposed to maximize WSR. Furthermore, we compute SE expressions and optimal precoding structures in novel closed-form with MR combining applied in the first layer.

\section{System Model}\label{sec:system}

We investigate a CF mMIMO system consisting of $M$ APs and $K$ UEs arbitrarily distributed in a wide coverage area. Both APs and UEs are equipped with multiple antennas, where $L$ and $N$ denote the number of antennas per AP and UE, respectively. We consider a standard block fading model, where the channel response is constant and frequency flat in a coherence block of $\tau _c$-length (channel uses). Let $\mathbf{H}_{mk}\in \mathbb{C}^{L\times N}$ denote the channel response between AP $m$ and UE $k$ and we assume $\mathbf{H}_{mk}$ are independent for different AP-UE pairs. The jointly-correlated (also known as the Weichselberger model \cite{1576533}) Rayleigh fading channel is given by 
\vspace*{-0.1cm}
\begin{equation}
\vspace*{-0.1cm}
\mathbf{H}_{mk}=\mathbf{U}_{mk,\mathrm{r}}\left( \mathbf{\tilde{\Omega}}_{mk}\odot \mathbf{H}_{mk,\mathrm{iid}} \right) \mathbf{U}_{mk,\mathrm{t}}^{H}
\end{equation}
where $\mathbf{U}_{mk,\mathrm{r}}=\left[ \mathbf{u}_{mk,\mathrm{r},1},\cdots ,\mathbf{u}_{mk,\mathrm{r},L} \right] \in \mathbb{C}^{L\times L}$ and $\mathbf{U}_{mk,\mathrm{t}}=\left[ \mathbf{u}_{mk,\mathrm{t},1},\cdots ,\mathbf{u}_{mk,\mathrm{t},N} \right] \in \mathbb{C}^{N\times N}$ are the eigenvector matrices of the one-sided correlation matrices $\mathbf{R}_{mk,\mathrm{r}}\triangleq \mathbb{E}\left[ \mathbf{H}_{mk}\mathbf{H}_{mk}^{H} \right]$ and $\mathbf{R}_{mk,\mathrm{t}}\triangleq \mathbb{E}\left[ \mathbf{H}_{mk}^{T}\mathbf{H}_{mk}^{*} \right]$, and $\mathbf{H}_{mk,\mathrm{iid}}\in \mathbb{C}^{L\times N}$ is composed of i.i.d. $\mathcal{N}_{\mathbb{C}}\left( 0,1 \right)$ random entries, respectively. Besides, $\mathbf{\Omega }_{mk}\triangleq \mathbf{\tilde{\Omega}}_{mk}\odot \mathbf{\tilde{\Omega}}_{mk}\in \mathbb{R} ^{L\times N}$ denotes the ``eigenmode coupling matrix'' with the $\left( l,n \right)$-th element $\left[ \mathbf{\Omega }_{mk} \right] _{ln}$ specifying the average amount of power coupling from $\mathbf{u}_{mk,\mathrm{r},l}$ to $\mathbf{u}_{mk,\mathrm{t},n}$. Moreover, $\mathbf{H}_{mk}$ can be formed as $\mathbf{H}_{mk}=\left[ \mathbf{h}_{mk,1},\cdots ,\mathbf{h}_{mk,N} \right] $ with $\mathbf{h}_{mk,n}\in \mathbb{C} ^L$ being the channel between AP $m$ and $n$-th antenna of UE $k$. By stacking the columns of $\mathbf{H}_{mk}$ on each other, we define $\mathbf{h}_{mk}\triangleq \mathrm{vec}\left( \mathbf{H}_{mk} \right) = [ \mathbf{h}_{mk,1}^{T},\cdots ,\mathbf{h}_{mk,N}^{T}] ^T\sim \mathcal{N} _{\mathbb{C}}\left( 0,\mathbf{R}_{mk} \right)$, where $\mathbf{R}_{mk}=\mathbb{E} \{ \mathbf{h}_{mk}\mathbf{h}_{mk}^{H} \} $ is the full correlation matrix \cite{9148706}
\begin{equation}
\mathbf{R}_{mk}=( \mathbf{U}_{mk,\mathrm{t}}^{*}\otimes \mathbf{U}_{mk,\mathrm{r}} ) \mathrm{diag}\left( \mathrm{vec}\left( \mathbf{\Omega }_{mk} \right) \right) ( \mathbf{U}_{mk,\mathrm{t}}^{*}\otimes \mathbf{U}_{mk,\mathrm{r}} ) ^H.
\end{equation}
The large-scale fading coefficient $\beta _{mk}$ can be extracted from $\mathbf{R}_{mk}$ as
$\beta _{mk}=\frac{1}{LN}\mathrm{tr}\left( \mathbf{R}_{mk} \right) =\frac{1}{LN}\left\| \mathbf{\Omega }_{mk} \right\| _1.$
\subsection{Channel Estimation}
Let $\tau _p$ and $\tau _c-\tau _p$ denote channel uses dedicated for the channel estimation and data transmission. In the phase of channel estimation, mutually orthogonal pilot matrices are constructed and $N$ mutually orthogonal pilot sequences are gathered to design a pilot matrix.
We define $\mathcal{P}_k$ as the index subset of UEs that use the same pilot matrix as UE $k$ including itself and $\mathbf{\Phi }_k$ as the pilot matrix assigned to UE $k$ with
\vspace*{-0.1cm}
\begin{equation}\notag
\begin{aligned}
\mathbf{\Phi }_{k}^{H}\mathbf{\Phi }_l=\begin{cases}
	\tau _p\mathbf{I}_N,&		\mathrm{if}\,\,k=l\\
	\mathbf{0}.&		\mathrm{if}\,\,k\ne l\\
\end{cases}
\end{aligned}
\end{equation}

When all UEs send their pilot matrices, the received signal at AP $m$ is $\mathbf{Y}_{m}^{\mathrm{p}}=\sum_{k=1}^K{\mathbf{H}_{mk}\mathbf{F}_{k,\mathrm{p}}\mathbf{\Phi }_{k}^{T}+\mathbf{N}_{m}^{\mathrm{p}}},$ where $\mathbf{F}_{k,\mathrm{p}}\in \mathbb{C} ^{N\times N}$ is the precoding matrix of UE $k$ for the phase of pilot transmission, $\mathbf{N}_{m}^{\mathrm{p}}\in \mathbb{C}^{L\times \tau _p}$ is the additive noise at AP $m$ with independent $\mathcal{N}_{\mathbb{C}}( 0,\sigma ^2 )$ elements and $\sigma ^2$ is the noise power, respectively. The pilot transmission is under the power constraint as $\mathrm{tr}( \mathbf{F}_{k,\mathrm{p}}\mathbf{F}_{k,\mathrm{p}}^{H}) \leqslant p_k$ with $p_k$ being the maximum transmit power of UE $k$. Then, AP $m$ computes the projection of $\mathbf{Y}_{mk}^{\mathrm{p}}$ onto $\mathbf{\Phi }_{k}^{*}$ as $\mathbf{Y}_{mk}^{\mathrm{p}}=\sum_{l\in \mathcal{P} _k}{\tau _p\mathbf{H}_{ml}\mathbf{F}_{l,\mathrm{p}}}+\mathbf{Q}_{m}^{\mathrm{p}},$ where $\mathbf{Q}_{m}^{\mathrm{p}}=\mathbf{N}_{m}^{\mathrm{p}}\mathbf{\Phi }_{k}^{*}$. Furthermore, by implementing vectorization operation, we have $\mathbf{y}_{mk}^{\mathrm{p}}\triangleq\mathrm{vec}\left( \mathbf{Y}_{mk}^{\mathrm{p}} \right)=\sum_{l\in \mathcal{P} _k}^{}{\tau _p\mathbf{\tilde{F}}_{l,\mathrm{p}}\mathbf{h}_{ml}+\mathbf{q}_{m}^{\mathrm{p}}}$,
where $\mathbf{\tilde{F}}_{l,\mathrm{p}}=\mathbf{F}_{l,\mathrm{p}}^{T}\otimes \mathbf{I}_L$ and $\mathbf{q}_{m}^{\mathrm{p}}=\mathrm{vec}\left( \mathbf{Q}_{m}^{\mathrm{p}} \right) $. Then, the MMSE estimation of $\mathbf{h}_{mk}$ is given by \cite{5340650} as
\vspace*{-0.1cm}
\begin{equation}
\mathbf{\hat{h}}_{mk}=\mathrm{vec}( \mathbf{\hat{H}}_{mk} ) =\mathbf{R}_{mk}\mathbf{\tilde{F}}_{k,\mathrm{p}}^{H}\mathbf{\Psi }_{mk}^{-1}\mathbf{y}_{mk}^{\mathrm{p}},
\vspace*{-0.1cm}
\end{equation}
where $\mathbf{\hat{H}}_{mk}$ is the MMSE estimation of $\mathbf{H}_{mk}$ and $\mathbf{\Psi }_{mk}=\sum\nolimits_{l\in \mathcal{P} _k}^{}{\tau _p\mathbf{\tilde{F}}_{l,\mathrm{p}}\mathbf{R}_{ml}\mathbf{\tilde{F}}_{l,\mathrm{p}}^{H}}+\sigma ^2\mathbf{I}_{LN}$. Note that the estimate $\mathbf{\hat{h}}_{mk}$ and estimation error $\mathbf{\tilde{h}}_{mk}=\mathbf{h}_{mk}-\mathbf{\hat{h}}_{mk}$ are independent random vectors distributed as $\mathbf{\hat{h}}_{mk}\sim \mathcal{N} _{\mathbb{C}}( 0,\mathbf{\hat{R}}_{mk} )$ and $\mathbf{\tilde{h}}_{mk}\sim \mathcal{N} _{\mathbb{C}}( 0,\mathbf{C}_{mk} )$, where $\mathbf{\hat{R}}_{mk}\triangleq \tau _p\mathbf{R}_{mk}\mathbf{\tilde{F}}_{k,\mathrm{p}}^{H}\mathbf{\Psi}_{mk}^{-1}\mathbf{\tilde{F}}_{k,\mathrm{p}}\mathbf{R}_{mk}$ and $\mathbf{C}_{mk}\triangleq \mathbf{R}_{mk}-\mathbf{\hat{R}}_{mk}$. We can form $\mathbf{R}_{mk}$ and $\mathbf{\hat{R}}_{mk}$ in the block structure as \cite{9148706} with $(n,i)$-th submatrix being $\mathbf{R}_{mk}^{ni}=\mathbb{E} \{ \mathbf{h}_{mk,n}\mathbf{h}_{mk,i}^{H} \}$ and $\mathbf{\hat{R}}_{mk}^{ni}=\mathbb{E} \{ \mathbf{\hat{h}}_{mk,n}\mathbf{\hat{h}}_{mk,i}^{H} \} $, respectively.\footnote{Let $\mathbf{X}^{ni}\in \mathbb{C} ^{L\times L}$ denote $(n,i)$-th submatrix of $\mathbf{X}\in \mathbb{C} ^{LN\times LN}$ in the following, unless mentioned. Notations applied in this section are defined similarly as that of \cite{04962}, such as ``$\odot$'', ``$\otimes$'' and ``$\mathrm{vec}\left( \cdot \right)$'', etc.}

\subsection{Data Transmission}
In the data transmission phase, all antennas of all UEs simultaneously send their data symbols to the APs. The received signal $\mathbf{y}_m\in \mathbb{C} ^L$ at AP $m$ is $\mathbf{y}_m=\sum_{k=1}^K{\mathbf{H}_{mk}\mathbf{s}_k}+\mathbf{n}_m$,
where $\mathbf{n}_m\sim \mathcal{N} _{\mathbb{C}}( 0,\sigma ^2\mathbf{I}_L ) $ is the independent receiver noise. The transmitted signal from UE $k$ $\mathbf{s}_k\in \mathbb{C} ^N$ can be constructed as $\mathbf{s}_k=\mathbf{F}_{k,\mathrm{u}}\mathbf{x}_k$, where $\mathbf{x}_k\sim \mathcal{N} _{\mathbb{C}}( 0,\mathbf{I}_N )$ is the data symbol of UE $k$ and $\mathbf{F}_{k,\mathrm{u}}\in \mathbb{C} ^{N\times N}$ is the precoding matrix for the phase of data transmission which should satisfy the power constraint of UE $k$ as $\mathrm{tr}( \mathbf{F}_{k,\mathrm{u}}\mathbf{F}_{k,\mathrm{u}}^{H} ) \leqslant p_k$.\footnote{Note that $\mathbf{F}_{k, \mathrm{u}}$ and $\mathbf{F}_{k, \mathrm{p}}$ in this paper are designed based on channel statistics so they are available for all APs and the CPU.} We implement a two-layer decoding structure to decode the data symbol.

In first layer, AP $m$ uses an arbitrary combining matrix $\mathbf{V}_{mk}\in \mathbb{C} ^{L\times N}$ to derive local detection $\mathbf{\tilde{x}}_{mk}=\mathbf{V}_{mk}^{H}\mathbf{y}_m$ of $\mathbf{x}_k$ as $\mathbf{\tilde{x}}_{mk}\!
=\!\mathbf{V}_{mk}^{H}\mathbf{H}_{mk}\mathbf{F}_{k,\mathrm{u}}\mathbf{x}_k+\sum_{l=1,l\ne k}^K{\mathbf{V}_{mk}^{H}\mathbf{H}_{ml}\mathbf{F}_{l,\mathrm{u}}\mathbf{x}_l}\!+\!\mathbf{V}_{mk}^{H}\mathbf{n}_m.$
Note that $\mathbf{V}_{mk}$ is designed based on the local channel estimates and one possible choice is MR combining $\mathbf{V}_{mk}=\mathbf{\hat{H}}_{mk}$. Besides, L-MMSE combining, which minimizes $\mathrm{MSE}_{mk}=\mathbb{E} \{ \| \mathbf{x}_k-\mathbf{V}_{mk}^{H}\mathbf{y}_m \| ^2 | \mathbf{\hat{H}}_{mk} \} $, is also a promising combining scheme as
\vspace*{-0.1cm}
\begin{equation}\label{LMMSE_Com}
\vspace*{-0.1cm}
\mathbf{V}_{mk}=\!\! \left( \sum_{l=1}^K{\left(\! \mathbf{\hat{H}}_{ml}\mathbf{\bar{F}}_{l,\mathrm{u}}\mathbf{\hat{H}}_{ml}^{H}+\mathbf{C}_{ml}^{\prime} \! \right)}+\sigma ^2\mathbf{I}_L \!\right) ^{-1}\!\!\mathbf{\hat{H}}_{mk}\mathbf{F}_{k,\mathrm{u}},
\end{equation}
where $\mathbf{\bar{F}}_{l,\mathrm{u}}\triangleq \mathbf{F}_{l,\mathrm{u}}\mathbf{F}_{l,\mathrm{u}}^{H}$ and $\mathbf{C}_{ml}^{\prime}=\mathbb{E} \{ \mathbf{\tilde{H}}_{ml}\mathbf{\bar{F}}_{l,\mathrm{u}}\mathbf{\tilde{H}}_{ml}^{H} \} \in \mathbb{C} ^{L\times L}$ with $\left( j,q \right) $-th element of $\mathbf{C}_{ml}^{\prime}$ being $\left[ \mathbf{C}_{ml}^{\prime} \right] _{jq}=\sum_{p_1=1}^N{\sum_{p_2=1}^N{\left[ \mathbf{\bar{F}}_l \right] _{p_2p_1}\left[ \mathbf{C}_{ml}^{p_2p_1} \right] _{jq}}}.$

Furthermore, we implement the ``LSFD'' method at the CPU \cite{[162]}. The CPU weights all the local estimates $\mathbf{\tilde{x}}_{mk}$ from all APs by the LSFD coefficient matrix 
as $\mathbf{\hat{x}}_k=\sum_{m=1}^M{\mathbf{A}_{mk}^{H}\mathbf{\tilde{x}}_{mk}}=\sum_{m=1}^M{\mathbf{A}_{mk}^{H}\mathbf{V}_{mk}^{H}\mathbf{H}_{mk}\mathbf{F}_{k,\mathrm{u}}\mathbf{x}_k}+\sum_{m=1}^M{\sum_{l=1,l\ne k}^K{\mathbf{A}_{mk}^{H}\mathbf{V}_{mk}^{H}\mathbf{H}_{ml}\mathbf{F}_{l,\mathrm{u}}\mathbf{x}_l}+}\mathbf{n}_{k}^{\prime},$
where $\mathbf{A}_{mk}\in \mathbb{C} ^{N\times N}$ is the complex LSFD coefficient matrix for AP $m$-UE $k$ and $\mathbf{n}_{k}^{\prime}=\sum_{m=1}^M{\mathbf{A}_{mk}^{H}\mathbf{V}_{mk}^{H}\mathbf{n}_m}$. Moreover, we can rewrite $\mathbf{\hat{x}}_k$ in a more compact form as
$\mathbf{\hat{x}}_k=\mathbf{A}_{k}^{H}\mathbf{G}_{kk}\mathbf{F}_{k,\mathrm{u}}\mathbf{x}_k+\sum_{l=1,l\ne k}^K{\mathbf{A}_{k}^{H}\mathbf{G}_{kl}\mathbf{F}_{l,\mathrm{u}}\mathbf{x}_l}+\mathbf{n}_{k}^{\prime},$
where $\mathbf{A}_k\triangleq [ \mathbf{A}_{1k}^{T},\cdots ,\mathbf{A}_{Mk}^{T} ] ^T\in \mathbb{C} ^{MN\times N}$ and $\mathbf{G}_{kl}\triangleq [ \mathbf{V}_{1k}^{H}\mathbf{H}_{1l};\cdots ;\mathbf{V}_{Mk}^{H}\mathbf{H}_{Ml} ] \in \mathbb{C} ^{MN\times N}$. Note that the CPU does not have the knowledge of channel estimates and is only aware of channel statistics. The conditional MSE matrix for UE $k$ $\mathbf{E}_k\triangleq \mathbb{E} \left\{ \left( \mathbf{x}_k-\mathbf{\hat{x}}_k \right) ( \mathbf{x}_k-\mathbf{\hat{x}}_k ) ^H\left| \mathbf{\Theta } \right. \right\}$ is
\vspace*{-0.1cm}
\begin{equation}\label{MSE_Matrix}
\begin{aligned}
\mathbf{E}_k&=\mathbf{I}_N-\mathbf{F}_{k,\mathrm{u}}^{H}\mathbb{E} \left\{ \mathbf{G}_{kk}^{H} \right\} \mathbf{A}_k-\mathbf{A}_{k}^{H}\mathbb{E} \left\{ \mathbf{G}_{kk} \right\} \mathbf{F}_{k,\mathrm{u}}\\
&+\mathbf{A}_{k}^{H}\left( \sum_{l=1}^K{\mathbb{E} \left\{ \mathbf{G}_{kl}\mathbf{\bar{F}}_{l,\mathrm{u}}\mathbf{G}_{kl}^{H} \right\}}+\sigma ^2\mathbf{S}_k \right) \mathbf{A}_k,
\vspace*{-0.1cm}
\end{aligned}
\end{equation}
where $\mathbf{\Theta }$ denotes all the channel statistics and $\mathbf{S}_k=\mathrm{diag}( \mathbb{E} \{ \mathbf{V}_{1k}^{H}\mathbf{V}_{1k} \} ,\cdots ,\mathbb{E} \{ \mathbf{V}_{Mk}^{H}\mathbf{V}_{Mk} \} )$. Then, we apply the classical use-and-then-forget (UatF) bound to derive the following ergodic achievable SE.

\begin{coro}
The achievable SE of UE $k$ can be written as
\vspace*{-0.1cm}
\begin{equation}\label{SE_Origin}
\begin{aligned}
\mathrm{SE}_k=\left( 1-\frac{\tau _p}{\tau _c} \right) \log _2\left| \mathbf{I}_N+\mathbf{D}_{k}^{H}\mathbf{\Sigma }_{k}^{-1}\mathbf{D}_k \right|,
\end{aligned}
\end{equation}
where $ \mathbf{\Sigma }_k=\sum_{l=1}^K{\mathbf{A}_{k}^{H}\mathbb{E} \{ \mathbf{G}_{kl}\mathbf{\bar{F}}_{l,\mathrm{u}}\mathbf{G}_{kl}^{H} \} \mathbf{A}_k}-\mathbf{D}_k\mathbf{D}_{k}^{H}+\sigma ^2\mathbf{A}_{k}^{H}\mathbf{S}_k\mathbf{A}_k$ and $\mathbf{D}_k=\mathbf{A}_{k}^{H}\mathbb{E} \{ \mathbf{G}_{kk} \} \mathbf{F}_{k,\mathrm{u}}$.
\end{coro}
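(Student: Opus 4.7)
My plan is to derive the expression via the standard use-and-then-forget (UatF) bound, which treats the CPU as operating on the compact signal model with only channel statistics at hand. First I would rewrite $\mathbf{\hat{x}}_k$ by adding and subtracting $\mathbf{A}_k^H\mathbb{E}\{\mathbf{G}_{kk}\}\mathbf{F}_{k,\mathrm{u}}\mathbf{x}_k=\mathbf{D}_k\mathbf{x}_k$, splitting it into a deterministic useful term $\mathbf{D}_k\mathbf{x}_k$ plus an effective noise
\begin{equation*}
\mathbf{w}_k=\mathbf{A}_k^H\bigl(\mathbf{G}_{kk}-\mathbb{E}\{\mathbf{G}_{kk}\}\bigr)\mathbf{F}_{k,\mathrm{u}}\mathbf{x}_k+\sum_{l\ne k}\mathbf{A}_k^H\mathbf{G}_{kl}\mathbf{F}_{l,\mathrm{u}}\mathbf{x}_l+\mathbf{n}_k'.
\end{equation*}
This decomposition is the whole point of UatF: because the decoder only knows channel statistics, $\mathbf{D}_k$ is the best deterministic equivalent channel it can use.

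Next, I would verify that $\mathbf{w}_k$ is uncorrelated with $\mathbf{x}_k$. The cross-term $\mathbb{E}\{(\mathbf{G}_{kk}-\mathbb{E}\{\mathbf{G}_{kk}\})\mathbf{F}_{k,\mathrm{u}}\mathbf{x}_k\mathbf{x}_k^H\}$ vanishes by construction, the inter-user terms vanish by independence of the $\mathbf{x}_l$'s, and the noise $\mathbf{n}_k'$ is independent of $\mathbf{x}_k$. I would then compute $\mathbb{E}\{\mathbf{w}_k\mathbf{w}_k^H\}$ term by term: the self-fluctuation term contributes $\mathbf{A}_k^H\mathbb{E}\{\mathbf{G}_{kk}\mathbf{\bar F}_{k,\mathrm{u}}\mathbf{G}_{kk}^H\}\mathbf{A}_k-\mathbf{D}_k\mathbf{D}_k^H$, each interferer $l\neq k$ contributes $\mathbf{A}_k^H\mathbb{E}\{\mathbf{G}_{kl}\mathbf{\bar F}_{l,\mathrm{u}}\mathbf{G}_{kl}^H\}\mathbf{A}_k$, and the noise contributes $\sigma^2\mathbf{A}_k^H\mathbf{S}_k\mathbf{A}_k$ after recognizing that different APs have independent receiver noise (so only the block-diagonal $\mathbf{S}_k$ survives). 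Summing gives exactly $\mathbf{\Sigma}_k$; note that \eqref{MSE_Matrix} already contains all the pieces, so this step is essentially a reshuffling.

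Finally, I would invoke the worst-case Gaussian noise result for MIMO channels: among all noise distributions with a fixed covariance that are uncorrelated with the input, Gaussian noise minimizes the mutual information. Treating $\mathbf{w}_k$ as if it were $\mathcal{N}_{\mathbb{C}}(0,\mathbf{\Sigma}_k)$ and using $\mathbf{x}_k\sim\mathcal{N}_{\mathbb{C}}(0,\mathbf{I}_N)$, the mutual information of the deterministic channel $\mathbf{\hat{x}}_k=\mathbf{D}_k\mathbf{x}_k+\mathbf{w}_k$ lower-bounds the achievable rate by $\log_2|\mathbf{I}_N+\mathbf{D}_k^H\mathbf{\Sigma}_k^{-1}\mathbf{D}_k|$. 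Multiplying by the pre-log factor $(1-\tau_p/\tau_c)$ to account for the pilot overhead in each coherence block yields the claimed expression.

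I expect no major obstacle; the main care point is verifying that the noise cross-AP correlation indeed collapses to the block-diagonal $\mathbf{S}_k$ (so $\mathbb{E}\{\mathbf{V}_{mk}^H\mathbf{n}_m\mathbf{n}_{m'}^H\mathbf{V}_{m'k}\}=\sigma^2\delta_{mm'}\mathbb{E}\{\mathbf{V}_{mk}^H\mathbf{V}_{mk}\}$) and that the $-\mathbf{D}_k\mathbf{D}_k^H$ subtraction appears correctly after isolating the zero-mean self-fluctuation. These are book-keeping rather than conceptual difficulties, and most of the heavy lifting is already encoded in \eqref{MSE_Matrix}.
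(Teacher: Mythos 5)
Your proposal is correct and follows exactly the standard use-and-then-forget derivation that the paper itself invokes (it omits the proof and points to \cite{7500452,194}, which proceed by the same signal decomposition, effective-noise covariance computation, and worst-case Gaussian noise argument). The bookkeeping you flag — the vanishing cross-AP noise correlation giving the block-diagonal $\sigma^2\mathbf{A}_k^H\mathbf{S}_k\mathbf{A}_k$ term and the $-\mathbf{D}_k\mathbf{D}_k^H$ subtraction from the self-fluctuation — checks out against \eqref{MSE_Matrix}, so there is nothing to add.
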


\begin{proof}
The proof of \eqref{SE_Origin} follows similar steps in \cite{7500452,194} and is therefore omitted.
\end{proof}


\newcounter{mytempeqncnt}
\begin{figure*}[t]
\normalsize
\setcounter{mytempeqncnt}{\value{equation}}
\setcounter{equation}{7}
\begin{align} \label{eq:SE_max} 
\mathrm{SE}_{k}^{\mathrm{opt}}=\left( 1-\frac{\tau _p}{\tau _c} \right) \log _2 \left| \mathbf{I}_N+\mathbf{F}_{k,\mathrm{u}}^{H}\mathbb{E} \left\{ \mathbf{G}_{kk} \right\} \left( \sum_{l=1}^K{\mathbb{E} \left\{ \mathbf{G}_{kl}\mathbf{\bar{F}}_{l,\mathrm{u}}\mathbf{G}_{kl}^{H} \right\} -\mathbb{E} \left\{ \mathbf{G}_{kk} \right\} \mathbf{\bar{F}}_{k,\mathrm{u}}\mathbb{E} \left\{ \mathbf{G}_{kk}^{H} \right\} +\sigma ^2\mathbf{S}_k} \right) ^{-1}\mathbb{E} \left\{ \mathbf{G}_{kk} \right\} \mathbf{F}_{k,\mathrm{u}} \right|.
\end{align}
\setcounter{equation}{\value{mytempeqncnt}}
\hrulefill
\vspace*{-0.4cm}
\end{figure*}

\newcounter{mytempeqncnt2}
\begin{figure*}[t]
\normalsize
\setcounter{mytempeqncnt2}{\value{equation}}
\setcounter{equation}{9}
\begin{align}\label{eq:Gamma_2}\notag
\left[ \mathbf{\Gamma }_{kl,m}^{\left( 2 \right)} \right] _{nn^{\prime}}&=\sum_{i=1}^N{\sum_{i^{\prime}=1}^N{\left[ \mathbf{\bar{F}}_l \right] _{i^{\prime}i}\left\{ \mathrm{tr}\left( \mathbf{R}_{ml}^{i^{\prime}i}\mathbf{P}_{mkl,\left( 1 \right)}^{n^{\prime}n} \right) \right.}}\\
&\left. +\tau _{p}^{2}\sum_{q_1=1}^N{\sum_{q_2=1}^N{\left[ \mathrm{tr}\left( \mathbf{\tilde{P}}_{mkl,\left( 2 \right)}^{q_1n}\mathbf{\tilde{R}}_{ml}^{i^{\prime}q_2}\mathbf{\tilde{R}}_{ml}^{q_2i}\mathbf{\tilde{P}}_{mkl,\left( 2 \right)}^{n^{\prime}q_1} \right) +\mathrm{tr}\left( \mathbf{\tilde{P}}_{mkl,\left( 2 \right)}^{q_1n}\mathbf{\tilde{R}}_{ml}^{i^{\prime}q_2} \right) \mathrm{tr}\left( \mathbf{\tilde{P}}_{mkl,\left( 2 \right)}^{n^{\prime}q_2}\mathbf{\tilde{R}}_{ml}^{q_2i} \right) \right]}} \right\}
\end{align}
\setcounter{equation}{\value{mytempeqncnt2}}
\hrulefill
\vspace*{-0.4cm}
\end{figure*}

\newcounter{mytempeqncnt1}
\begin{figure*}[t]
\normalsize
\setcounter{mytempeqncnt1}{\value{equation}}
\setcounter{equation}{10}
\begin{align}\label{Closed_form_LSFD_MSE}
\begin{cases}
	\mathbf{A}_{k}^{\mathrm{opt}}=\left( \sum_{l=1}^K{\mathbf{T}_{kl,\left( 1 \right)}+\sum_{l\in \mathcal{P} _k}^{}{\mathbf{T}_{kl,\left( 2 \right)}}}+\sigma ^2\mathbf{S}_k \right) ^{-1}\mathbf{Z}_k\mathbf{F}_{k,\mathrm{u}},\\
	\mathbf{E}_{k}^{\mathrm{opt}}=\mathbf{I}_N-\mathbf{F}_{k,\mathrm{u}}^{H}\mathbf{Z}_{k}^{H}\left( \sum_{l=1}^K{\mathbf{T}_{kl,\left( 1 \right)}+\sum_{l\in \mathcal{P} _k}^{}{\mathbf{T}_{kl,\left( 2 \right)}}}+\sigma ^2\mathbf{S}_k \right) ^{-1}\mathbf{Z}_k\mathbf{F}_{k,\mathrm{u}}.\\
\end{cases}
\end{align}
\setcounter{equation}{\value{mytempeqncnt1}}
\hrulefill
\vspace*{-0.4cm}
\end{figure*}

\newcounter{mytempeqncnt3}
\begin{figure*}[t]
\normalsize
\setcounter{mytempeqncnt3}{\value{equation}}
\setcounter{equation}{14}
\begin{align}\label{F_Problem}\notag
&\underset{\left\{ \mathbf{F} \right\}}{\min}\sum_{k=1}^K{\mu _k\left[ \mathrm{tr}\left( \mathbf{W}_k\left( \mathbf{I}_N-\mathbf{F}_{k,\mathrm{u}}^{H}\mathbb{E} \left\{ \mathbf{G}_{kk}^{H} \right\} \mathbf{A}_k \right) \left( \mathbf{I}_N-\mathbf{F}_{k,\mathrm{u}}^{H}\mathbb{E} \left\{ \mathbf{G}_{kk}^{H} \right\} \mathbf{A}_k \right) ^H \right) \right]}\\
&\quad \  +\sum_{k=1}^K{\mu _k\left[ \mathrm{tr}\left( \mathbf{W}_k\mathbf{A}_{k}^{H}\left( \sum_{l\ne k}^K{\mathbb{E} \left\{ \mathbf{G}_{kl}\mathbf{\bar{F}}_{l,\mathrm{u}}\mathbf{G}_{kl}^{H} \right\}}+\sigma ^2\mathbf{S}_k \right) \mathbf{A}_k \right) \right]} \quad \mathrm{s}.\mathrm{t}. \left\| \mathbf{F}_{k,\mathrm{u}} \right\| ^2\leqslant p_k\,\,\forall k=1,\cdots ,K
\end{align}
\setcounter{equation}{\value{mytempeqncnt3}}
\hrulefill
\vspace*{-0.4cm}
\end{figure*}

\newcounter{mytempeqncnt4}
\begin{figure*}[t]
\normalsize
\setcounter{mytempeqncnt4}{\value{equation}}
\setcounter{equation}{15}
\begin{align}\label{Lagrange_Function}\notag
&f\left( \mathbf{F}_{1,\mathrm{u}},\cdots ,\mathbf{F}_{K,\mathrm{u}} \right) =\sum_{k=1}^K{\mu _k\left[ \mathrm{tr}\left( \mathbf{W}_k\left( \mathbf{I}_N-\mathbf{F}_{k,\mathrm{u}}^{H}\mathbb{E} \left\{ \mathbf{G}_{kk}^{H} \right\} \mathbf{A}_k \right) \left( \mathbf{I}_N-\mathbf{F}_{k,\mathrm{u}}^{H}\mathbb{E} \left\{ \mathbf{G}_{kk}^{H} \right\} \mathbf{A}_k \right) ^H \right) \right]}\\
&+\sum_{k=1}^K{\mu _k\left[ \mathrm{tr}\left( \mathbf{W}_k\mathbf{A}_{k}^{H}\left( \sum_{l\ne k}^K{\mathbb{E} \left\{ \mathbf{G}_{kl}\mathbf{\bar{F}}_{l,\mathrm{u}}\mathbf{G}_{kl}^{H} \right\}}+\sigma ^2\mathbf{S}_k \right) \mathbf{A}_k \right) \right]}+\sum_{k=1}^K{\lambda _k\left( \mathrm{tr}\left( \mathbf{F}_{k,\mathrm{u}}\mathbf{F}_{k,\mathrm{u}}^{H} \right) -p_k \right)}.
\end{align}
\setcounter{equation}{\value{mytempeqncnt4}}
\hrulefill
\vspace*{-0.4cm}
\end{figure*}

\newcounter{mytempeqncnt5}
\begin{figure*}[t]
\normalsize
\setcounter{mytempeqncnt5}{\value{equation}}
\setcounter{equation}{17}
\begin{align}\label{Term_gg}
\mathrm{tr}\left( \mathbf{R}_{mk}^{ni}\mathbf{P}_{mlk,\left( 1 \right)}^{p^{\prime}p} \right) +\tau _{p}^{2}\left[ \sum_{q_1=1}^N{\sum_{q_2=1}^N{\mathrm{tr}\left( \mathbf{\tilde{P}}_{mlk,\left( 2 \right)}^{q_1p}\mathbf{\tilde{R}}_{mk}^{nq_2}\mathbf{\tilde{R}}_{mk}^{q_2i}\mathbf{\tilde{P}}_{mlk,\left( 2 \right)}^{p^{\prime}q_1} \right) +\mathrm{tr}\left( \mathbf{\tilde{P}}_{mlk,\left( 2 \right)}^{q_1n}\mathbf{\tilde{R}}_{mk}^{nq_1} \right) \mathrm{tr}\left( \mathbf{\tilde{P}}_{mlk,\left( 2 \right)}^{p^{\prime}q_2}\mathbf{\tilde{R}}_{mk}^{q_2i} \right)}} \right]
\end{align}
\setcounter{equation}{\value{mytempeqncnt4}}
\hrulefill
\vspace*{-0.4cm}
\end{figure*}

\newcounter{mytempeqncnt6}
\begin{figure*}[t]
\normalsize
\setcounter{mytempeqncnt6}{\value{equation}}
\setcounter{equation}{18}
\begin{align}\label{E_Inverse}
\left( \mathbf{E}_{k}^{\mathrm{opt}} \right) ^{-1}=\mathbf{I}_N+\mathbf{F}_{k,\mathrm{u}}^{H}\mathbb{E} \left\{ \mathbf{G}_{kk}^{H} \right\} \left( \sum_{l=1}^K{\mathbb{E} \left\{ \mathbf{G}_{kl}\mathbf{\bar{F}}_{l,\mathrm{u}}\mathbf{G}_{kl}^{H} \right\}}-\mathbb{E} \left\{ \mathbf{G}_{kk} \right\} \mathbf{\bar{F}}_{k,\mathrm{u}}\mathbb{E} \left\{ \mathbf{G}_{kk}^{H} \right\} +\sigma ^2\mathbf{S}_k \right) ^{-1}\mathbb{E} \left\{ \mathbf{G}_{kk} \right\} \mathbf{F}_{k,\mathrm{u}}
\end{align}
\setcounter{equation}{\value{mytempeqncnt6}}
\hrulefill
\vspace*{-0.4cm}
\end{figure*}
Note that $\mathbf{A}_k$ can be optimized by the CPU based on channel statistics to maximize the achievable SE. Based on the theory of optimal receivers as in \cite{tse2005fundamentals}, we derive the optimal LSFD coefficient matrix maximizing the achievable SE as following corallary.
\begin{coro}
The achievable SE in \eqref{SE_Origin} is maximized by
\vspace*{-0.1cm}
\begin{equation}\label{Optimal_LSFD}
\vspace*{-0.1cm}
\mathbf{A}_{k}^{\mathrm{opt}}\!=\!\!\left( \sum_{l=1}^K{\mathbb{E} \{ \mathbf{G}_{kl}\mathbf{\bar{F}}_{l,\mathrm{u}}\mathbf{G}_{kl}^{H} \}}\!+\!\sigma ^2\mathbf{S}_k \right) ^{-1}\!\!\mathbb{E} \{ \mathbf{G}_{kk} \} \mathbf{F}_{k,\mathrm{u}},
\end{equation}
leading to the maximum value as \eqref{eq:SE_max}.
\end{coro}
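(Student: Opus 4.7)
The plan is to recast the maximization over $\mathbf{A}_k$ as a standard linear-receiver design problem for a virtual deterministic Gaussian MIMO channel and then invoke the optimality of the linear MMSE receiver. First I would introduce the compact notation
\begin{equation*}
\mathbf{B}_k \triangleq \mathbb{E}\{\mathbf{G}_{kk}\}\mathbf{F}_{k,\mathrm{u}}, \qquad \mathbf{Q}_k \triangleq \sum_{l=1}^K \mathbb{E}\{\mathbf{G}_{kl}\mathbf{\bar{F}}_{l,\mathrm{u}}\mathbf{G}_{kl}^{H}\} + \sigma^2 \mathbf{S}_k,
\end{equation*}
which lets me identify $\mathbf{D}_k = \mathbf{A}_k^H \mathbf{B}_k$ and $\mathbf{\Sigma}_k = \mathbf{A}_k^H(\mathbf{Q}_k - \mathbf{B}_k \mathbf{B}_k^H)\mathbf{A}_k$ in \eqref{SE_Origin}. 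The log-determinant inside $\mathrm{SE}_k$ then reads $\log_2|\mathbf{I}_N + \mathbf{B}_k^H \mathbf{A}_k(\mathbf{A}_k^H(\mathbf{Q}_k - \mathbf{B}_k \mathbf{B}_k^H)\mathbf{A}_k)^{-1}\mathbf{A}_k^H \mathbf{B}_k|$, which is exactly the post-combining mutual information of the virtual deterministic link $\mathbf{y} = \mathbf{B}_k \mathbf{x}_k + \mathbf{w}$, with $\mathbf{x}_k \sim \mathcal{N}_{\mathbb{C}}(\mathbf{0},\mathbf{I}_N)$ and effective noise covariance $\mathbf{Q}_k - \mathbf{B}_k \mathbf{B}_k^H$.

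Second, I would apply the matrix-valued SINR-maximization argument underlying the linear MMSE receiver (see, e.g., \cite{tse2005fundamentals}): for every $\mathbf{A}_k$ with full column rank one has
\begin{equation*}
\mathbf{B}_k^H \mathbf{A}_k \bigl(\mathbf{A}_k^H(\mathbf{Q}_k - \mathbf{B}_k \mathbf{B}_k^H)\mathbf{A}_k\bigr)^{-1}\! \mathbf{A}_k^H \mathbf{B}_k \preceq \mathbf{B}_k^H (\mathbf{Q}_k - \mathbf{B}_k \mathbf{B}_k^H)^{-1} \mathbf{B}_k,
\end{equation*}
with equality whenever the columns of $\mathbf{A}_k$ span the range of $\mathbf{Q}_k^{-1} \mathbf{B}_k$. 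The canonical MMSE choice $\mathbf{A}_k^{\mathrm{opt}} = \mathbf{Q}_k^{-1} \mathbf{B}_k$ is the one written in \eqref{Optimal_LSFD}; substituting it back into the log-determinant and expanding $\mathbf{B}_k$ and $\mathbf{Q}_k$ reproduces \eqref{eq:SE_max} verbatim.

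The argument is essentially a lift of a textbook MIMO receiver-design result to the second-layer CPU processing stage, so no substantive obstacle arises: the UatF bound already turned $\mathbb{E}\{\mathbf{G}_{kk}\}$ and $\mathbb{E}\{\mathbf{G}_{kl}\mathbf{\bar{F}}_{l,\mathrm{u}}\mathbf{G}_{kl}^{H}\}$ into deterministic quantities, and no further property of the Weichselberger channel or of the first-layer combiner $\mathbf{V}_{mk}$ is needed. The two routine items I would check carefully are (i) that $\mathbf{Q}_k - \mathbf{B}_k \mathbf{B}_k^H$ is positive definite, which holds because it is the interference-plus-noise covariance after the first layer and inherits strict positivity from $\sigma^2 \mathbf{S}_k$, and (ii) the Woodbury identity $\mathbf{Q}_k^{-1} \mathbf{B}_k = (\mathbf{Q}_k - \mathbf{B}_k \mathbf{B}_k^H)^{-1} \mathbf{B}_k(\mathbf{I}_N + \mathbf{B}_k^H(\mathbf{Q}_k - \mathbf{B}_k \mathbf{B}_k^H)^{-1}\mathbf{B}_k)^{-1}$, which confirms that the MMSE filter $\mathbf{Q}_k^{-1} \mathbf{B}_k$ attains the upper bound above.
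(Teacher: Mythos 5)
Your proposal is correct and follows essentially the same route as the paper, which simply invokes ``the theory of optimal receivers'' from \cite{tse2005fundamentals} without writing out the details: identifying $\mathbf{D}_k=\mathbf{A}_k^H\mathbf{B}_k$ and $\mathbf{\Sigma}_k=\mathbf{A}_k^H(\mathbf{Q}_k-\mathbf{B}_k\mathbf{B}_k^H)\mathbf{A}_k$, applying the projection inequality $\mathbf{A}_k(\mathbf{A}_k^H\mathbf{N}_k\mathbf{A}_k)^{-1}\mathbf{A}_k^H\preceq\mathbf{N}_k^{-1}$, and using the Woodbury identity to show that $\mathbf{Q}_k^{-1}\mathbf{B}_k$ attains equality is precisely the standard MMSE-receiver argument the authors have in mind, and your substitution correctly reproduces \eqref{eq:SE_max}. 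The two side conditions you flag (positive definiteness of the effective noise covariance and the range condition for equality) are handled appropriately, so no gap remains.
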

In the sense that the optimal LSFD coefficient matrix in \eqref{Optimal_LSFD} can also minimize the conditional MSE of UE $k$ $\mathrm{MSE}_k=\mathrm{tr}\left( \mathbf{E}_k \right)$.  If the optimal LSFD coefficient matrix applied, the MSE matrix for UE $k$ can be written as
\vspace*{-0.1cm}
\addtocounter{equation}{1}
\begin{equation}\label{MMSE_MSE_Matrix}
\begin{aligned}
\mathbf{E}_{k}^{\mathrm{opt}}=\mathbf{I}_N-\mathbf{F}_{k,\mathrm{u}}^{H}\mathbb{E} \left\{ \mathbf{G}_{kk}^{H} \right\} \mathbf{A}_{k}^{\mathrm{opt}}.
\end{aligned}
\vspace*{-0.1cm}
\end{equation}
Furthermore, if MR combining $\mathbf{V}_{mk}=\mathbf{\hat{H}}_{mk}$ is applied, we derive closed-form SE expressions as the following theorem.
\begin{thm}\label{Th_Closed_Form}
For MR combining $\mathbf{V}_{mk}=\mathbf{\hat{H}}_{mk}$, the achievable SE can be computed in closed-form as $\mathrm{SE}_k=( 1-\frac{\tau _p}{\tau _c} ) \log _2| \mathbf{I}_N+\mathbf{D}_{k,\mathrm{c}}^{H}\mathbf{\Sigma }_{k,\mathrm{c}}^{-1}\mathbf{D}_{k,\mathrm{c}} |,$
where $\mathbf{\Sigma }_{k,\mathrm{c}}=\mathbf{A}_{k}^{H}( \sum_{l=1}^K{\mathbf{T}_{kl,( 1 )}+\sum_{l\in \mathcal{P} _k}^{}{\mathbf{T}_{kl,( 2 )}}} ) \mathbf{A}_k-\mathbf{D}_{k,\mathrm{c}}\mathbf{D}_{k,\mathrm{c}}^{H}+\sigma ^2\mathbf{A}_{k}^{H}\mathbf{S}_{k,\mathrm{c}}\mathbf{A}_k$ and $\mathbf{D}_{k,\mathrm{c}}=\mathbf{A}_{k}^{H}\mathbf{Z}_k\mathbf{F}_{k,\mathrm{u}}$, with $\mathbb{E} \{ \mathbf{G}_{kk} \} =\mathbf{Z}_k=[ \mathbf{Z}_{1k}^{T},\cdots ,\mathbf{Z}_{Mk}^{T} ] ^T$ and $\mathbf{S}_{k,\mathrm{c}}=\mathrm{diag}( \mathbf{Z}_{1k},\cdots ,\mathbf{Z}_{Mk} )$ with the $\left( n,n^{\prime} \right) $-th element of $\mathbf{Z}_{mk}\in \mathbb{C} ^{N\times N}$ being $\left[ \mathbf{Z}_{mk} \right] _{nn^{\prime}}=\mathrm{tr}( \mathbf{\hat{R}}_{mk}^{n^{\prime}n} )$. Moreover, $\mathbf{T}_{kl,\left( 1 \right)}\triangleq \mathrm{diag}( \mathbf{\Gamma }_{kl,1}^{( 1 )},\cdots ,\mathbf{\Gamma }_{kl,M}^{( 1 )} ) \in \mathbb{C} ^{MN\times MN}$ and
\vspace*{-0.1cm}
\begin{equation}\notag
\vspace*{-0.1cm}
\begin{aligned}
\mathbf{T}_{kl,\left( 2 \right)}^{mm'}=\left\{ \begin{array}{c}
	\mathbf{\Gamma }_{kl,m}^{\left( 2 \right)}-\mathbf{\Gamma }_{kl,m}^{\left( 1 \right)},   m=m'\\
	\mathbf{\Lambda }_{mkl}\mathbf{\bar{F}}_{l,\mathrm{u}}\mathbf{\Lambda }_{m'lk},  m\ne m'\\
\end{array} \right.
\end{aligned}
\end{equation}
where $\mathbf{T}_{kl,\left( 2 \right)}^{mm^{\prime}}$ denotes $\left( m,m^{\prime}\right) $-submatrix of $\mathbf{T}_{kl,\left( 2 \right)}\in \mathbb{C} ^{MN\times MN}$, the $\left( n,n^{\prime} \right) $-th element of $N\times N$-dimension complex matrices $\mathbf{\Lambda }_{mkl}$, $\mathbf{\Lambda }_{m^{\prime}lk}$, $\mathbf{\Gamma }_{kl,m}^{\left( 1 \right)}$ and $\mathbf{\Gamma }_{kl,m}^{\left( 2 \right)}$ are $[ \mathbf{\Lambda }_{mkl} ] _{nn^{\prime}}=\mathrm{tr}( \mathbf{\Xi }_{mkl}^{n^{\prime}n} ) $, $[ \mathbf{\Lambda }_{m^{\prime}lk} ] _{nn^{\prime}}=\mathrm{tr}( \mathbf{\Xi }_{m^{\prime}lk}^{n^{\prime}n} ) $, $[ \mathbf{\Gamma }_{mkl}^{( 1 )} ] _{nn^{\prime}}=\sum_{i=1}^N{\sum_{i^{\prime}=1}^N{[ \mathbf{\bar{F}}_{l,\mathrm{u}} ] _{i^{\prime}i}\mathrm{tr}( \mathbf{R}_{ml}^{i^{\prime}i}\mathbf{\hat{R}}_{mk}^{n^{\prime}n} )}}$ and $[ \mathbf{\Gamma }_{kl,m}^{( 2 )} ] _{nn^{\prime}}$ given by \eqref{eq:Gamma_2} with $\mathbf{\Xi }_{mkl}=\tau _p\mathbf{R}_{ml}\mathbf{\tilde{F}}_{l,\mathrm{p}}^{H}\mathbf{\Psi }_{mk}^{-1}\mathbf{\tilde{F}}_{k,\mathrm{p}}\mathbf{R}_{mk}$, $\mathbf{\Xi }_{m^{\prime}lk}=\tau _p\mathbf{R}_{m^{\prime}k}\mathbf{\tilde{F}}_{k,\mathrm{p}}^{H}\mathbf{\Psi }_{m^{\prime}k}^{-1}\mathbf{\tilde{F}}_{l,\mathrm{p}}\mathbf{R}_{m^{\prime}l}$, $\mathbf{P}_{mkl,( 1 )}=\tau _p\mathbf{S}_{mk}( \mathbf{\Psi }_{mk}-\tau _p\mathbf{\tilde{F}}_{l,\mathrm{p}}\mathbf{R}_{ml}\mathbf{\tilde{F}}_{l,\mathrm{p}}^{H} ) \mathbf{S}_{mk}^{H}$, $\mathbf{S}_{mk}=\mathbf{R}_{mk}\mathbf{\tilde{F}}_{k,\mathrm{p}}^{H}\mathbf{\Psi }_{mk}^{-1}$, $\mathbf{P}_{mkl,( 2 )}=\mathbf{S}_{mk}\mathbf{\tilde{F}}_{l,\mathrm{p}}\mathbf{R}_{ml}\mathbf{\tilde{F}}_{l,\mathrm{p}}^{H}\mathbf{S}_{mk}^{H}$, $\mathbf{\tilde{R}}_{ml}^{ni}$ and $\mathbf{\tilde{P}}_{mkl,( 2 )}^{ni}$ being $( n,i )$-submatrix of $\mathbf{R}_{ml}^{\frac{1}{2}}$ and $\mathbf{P}_{mkl,( 2 )}^{\frac{1}{2}}$, respectively. Furthermore, the optimal LSFD coefficient matrix in \eqref{Optimal_LSFD} and MSE matrix in \eqref{MMSE_MSE_Matrix} can also be computed in closed-form as \eqref{Closed_form_LSFD_MSE}.
\end{thm}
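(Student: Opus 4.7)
The plan is to specialize the generic SE expression from Corollary 1 to the MR case $\mathbf{V}_{mk}=\mathbf{\hat{H}}_{mk}$ and evaluate the three deterministic quantities that appear there, namely $\mathbb{E}\{\mathbf{G}_{kk}\}$, $\mathbf{S}_k=\mathrm{diag}(\mathbb{E}\{\mathbf{V}_{mk}^H\mathbf{V}_{mk}\})$, and the block matrix $\mathbb{E}\{\mathbf{G}_{kl}\mathbf{\bar{F}}_{l,\mathrm{u}}\mathbf{G}_{kl}^H\}$. Since the log-det formula is preserved, once these three quantities are expressed through the statistics $\mathbf{R}_{mk}$, $\mathbf{\hat{R}}_{mk}$ (and the blocks $\mathbf{R}_{mk}^{ni}$, $\mathbf{\hat{R}}_{mk}^{ni}$, etc.), the claimed closed form follows by identifying the resulting aggregates with $\mathbf{Z}_k$, $\mathbf{S}_{k,\mathrm{c}}$, $\mathbf{T}_{kl,(1)}$, $\mathbf{T}_{kl,(2)}$. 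The optimal LSFD and MSE in \eqref{Closed_form_LSFD_MSE} are then obtained by substituting these closed forms into \eqref{Optimal_LSFD} and \eqref{MMSE_MSE_Matrix}.

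The easy step is $\mathbb{E}\{\mathbf{G}_{kk}\}$ and $\mathbf{S}_k$. Writing $\mathbf{H}_{mk}=\mathbf{\hat{H}}_{mk}+\mathbf{\tilde{H}}_{mk}$ and using the MMSE orthogonality $\mathbb{E}\{\mathbf{\hat{H}}_{mk}^H\mathbf{\tilde{H}}_{mk}\}=\mathbf{0}$, I get $\mathbb{E}\{\mathbf{\hat{H}}_{mk}^H\mathbf{H}_{mk}\}=\mathbb{E}\{\mathbf{\hat{H}}_{mk}^H\mathbf{\hat{H}}_{mk}\}$ and its $(n,n')$ entry is $\mathbb{E}\{\mathbf{\hat{h}}_{mk,n}^H\mathbf{\hat{h}}_{mk,n'}\}=\mathrm{tr}(\mathbf{\hat{R}}_{mk}^{n'n})$. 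This identifies $\mathbf{Z}_{mk}$ and, because $\mathbf{V}_{mk}^H\mathbf{V}_{mk}=\mathbf{\hat{H}}_{mk}^H\mathbf{\hat{H}}_{mk}$, the block-diagonal structure $\mathbf{S}_{k,\mathrm{c}}=\mathrm{diag}(\mathbf{Z}_{1k},\dots,\mathbf{Z}_{Mk})$.

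The core of the proof is the interference block $\mathbb{E}\{\mathbf{V}_{mk}^H\mathbf{H}_{ml}\mathbf{\bar{F}}_{l,\mathrm{u}}\mathbf{H}_{m'l}^H\mathbf{V}_{m'k}\}$, which I split by $(m=m',\,m\neq m')$ and by $(l\in\mathcal{P}_k,\,l\notin\mathcal{P}_k)$. For $m\neq m'$ the channels at different APs are independent, so the expectation factors as $\mathbb{E}\{\mathbf{\hat{H}}_{mk}^H\mathbf{H}_{ml}\}\mathbf{\bar{F}}_{l,\mathrm{u}}\mathbb{E}\{\mathbf{H}_{m'l}^H\mathbf{\hat{H}}_{m'k}\}$; when $l\notin\mathcal{P}_k$ each of these vanishes because $\mathbf{\hat{h}}_{mk}$ and $\mathbf{h}_{ml}$ are independent zero-mean Gaussians, whereas for $l\in\mathcal{P}_k$ the pilot-sharing identity $\mathbf{\hat{h}}_{mk}=\mathbf{R}_{mk}\mathbf{\tilde{F}}_{k,\mathrm{p}}^H\mathbf{\Psi}_{mk}^{-1}\mathbf{y}_{mk}^{\mathrm{p}}$ implies $\mathbb{E}\{\mathbf{\hat{h}}_{mk}\mathbf{h}_{ml}^H\}=\tau_p\mathbf{S}_{mk}\mathbf{\tilde{F}}_{l,\mathrm{p}}\mathbf{R}_{ml}=\mathbf{\Xi}_{mkl}$, which after extracting blocks yields $\mathbf{\Lambda}_{mkl}\mathbf{\bar{F}}_{l,\mathrm{u}}\mathbf{\Lambda}_{m'lk}$. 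For $m=m'$ and $l\notin\mathcal{P}_k$, independence of $\mathbf{\hat{H}}_{mk}$ and $\mathbf{H}_{ml}$ lets me take the expectation of $\mathbf{H}_{ml}\mathbf{\bar{F}}_{l,\mathrm{u}}\mathbf{H}_{ml}^H$ first, producing the block form encoded by $\mathbf{\Gamma}_{kl,m}^{(1)}$. For $m=m'$ and $l\in\mathcal{P}_k$ the two factors are correlated, so I decompose $\mathbf{h}_{ml}=\mathbf{R}_{ml}^{1/2}\mathbf{w}_{ml}$ and $\mathbf{\hat{h}}_{mk}=\mathbf{S}_{mk}(\sum_{l'\in\mathcal{P}_k}\tau_p\mathbf{\tilde{F}}_{l',\mathrm{p}}\mathbf{R}_{ml'}^{1/2}\mathbf{w}_{ml'}+\mathbf{q}_m^{\mathrm{p}})$, expanding the quartic expectation of i.i.d.\ complex Gaussian entries via Isserlis' theorem. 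The non-vanishing Wick pairings generate exactly the two trace patterns in \eqref{eq:Gamma_2}, with $\mathbf{P}_{mkl,(1)}$ collecting the pairings that do not couple with $\mathbf{w}_{ml}$ and $\mathbf{P}_{mkl,(2)}$ (together with the square-root sub-blocks $\mathbf{\tilde{R}}_{ml}^{ni}$, $\mathbf{\tilde{P}}_{mkl,(2)}^{ni}$) collecting the coupled pairings; the diagonal self-term $-\mathbf{\Gamma}_{kl,m}^{(1)}$ in $\mathbf{T}_{kl,(2)}^{mm}$ corrects the double counting of the uncorrelated contribution already contained in $\mathbf{T}_{kl,(1)}$.

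Assembling the blocks gives $\mathbb{E}\{\mathbf{G}_{kl}\mathbf{\bar{F}}_{l,\mathrm{u}}\mathbf{G}_{kl}^H\}=\mathbf{T}_{kl,(1)}+\mathbf{1}(l\in\mathcal{P}_k)\mathbf{T}_{kl,(2)}$, and substituting into Corollary 1 yields the claimed $\mathbf{\Sigma}_{k,\mathrm{c}}$ and $\mathbf{D}_{k,\mathrm{c}}$. Finally, \eqref{Closed_form_LSFD_MSE} follows by plugging these closed forms into \eqref{Optimal_LSFD} and \eqref{MMSE_MSE_Matrix}. The main obstacle is the $l\in\mathcal{P}_k$, $m=m'$ evaluation: keeping track of four complex Gaussian factors with the Weichselberger joint correlation forces a careful block-by-block use of Isserlis' theorem and a consistent square-root bookkeeping so that the emerging trace structure lines up with the compact form in \eqref{eq:Gamma_2}; the remaining steps are largely algebraic reassembly.
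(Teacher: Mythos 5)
Your proposal is correct and follows precisely the route the paper itself defers to (its ``proof'' is a one-line citation to the standard closed-form SE derivations): MMSE orthogonality for $\mathbb{E}\{\mathbf{G}_{kk}\}$ and $\mathbf{S}_k$, a case split on $m=m'$ versus $m\neq m'$ and $l\in\mathcal{P}_k$ versus $l\notin\mathcal{P}_k$ for the interference blocks, and Isserlis/Wick fourth-moment expansions for the pilot-contaminated same-AP term, with the $-\mathbf{\Gamma}_{kl,m}^{(1)}$ correction in $\mathbf{T}_{kl,(2)}^{mm}$ correctly explained as avoiding double counting. The block identifications with $\mathbf{Z}_k$, $\mathbf{S}_{k,\mathrm{c}}$, $\mathbf{T}_{kl,(1)}$, $\mathbf{T}_{kl,(2)}$ and the final substitution into \eqref{Optimal_LSFD} and \eqref{MMSE_MSE_Matrix} all check out.
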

\begin{IEEEproof}
The proof of Theorem~\ref{Th_Closed_Form} follows similar steps in \cite{8187178,9148706,04962} and is therefore omitted.
\end{IEEEproof}


%
%
\section{Iteratively WMMSE precoding design}\label{sec:Iterative Optimization}
In this section, we focus on the design of UL precoding matrices. A popular weighted sum-rate maximization problem\footnote{Note that ``SE'' is equivalent to ``rate'' except from having one scaling factor $(\tau_c-\tau_p)/\tau_c$ . Since $\tau_c$ and $\tau_p$ are constants, so we ignore the difference between SE and rate in the optimization problem.} is investigated as\footnote{We only optimize the precoding matrices for the phase of data transmission $\mathbf{F}_{k,\mathrm{u}}$. The optimization of  $\mathbf{F}_{k,\mathrm{p}}$ is left for future research. The notation $\mathbf{F}$ is short for $\{ \mathbf{F}_{k,\mathrm{u}} \} _{k=1,\cdots ,K}$, denoting all variables $\mathbf{F}_{k,\mathrm{u}}$ with $k=1,\cdots ,K$. Similar definitions are applied for $\mathbf{A}$, $\mathbf{W}$, $\mathbf{S}$ in the following. The notation $\mathbf{G}$ denotes all $\mathbf{G}$-relevant variables, like $\mathbb{E} \{ \mathbf{G}_{kl}\mathbf{\bar{F}}_{l,\mathrm{u}}\mathbf{G}_{kl}^{H} \} $ and $\mathbb{E} \{ \mathbf{G}_{kk} \}$, etc.}
\vspace*{-0.1cm}
\addtocounter{equation}{2}
\begin{equation}\label{Sum_SE}
\begin{aligned}
\underset{\left\{ \mathbf{F},\mathbf{A},\mathbf{G},\mathbf{S} \right\}}{\max}\sum_{k=1}^K{\mu _k\mathrm{SE}_{k}} \ \mathrm{s}.\mathrm{t}. \left\| \mathbf{F}_{k,\mathrm{u}} \right\| ^2\leqslant p_k\,\forall k=1,\cdots ,K
\end{aligned}
\end{equation}
where $\mu _k$ represents the priority weight of UE $k$ and $\mathrm{SE}_{k}$ is given by \eqref{SE_Origin} with arbitrary combining structure in the first decoding layer.
As in \cite{5756489} and \cite{4712693}, the matrix-weighted sum-MSE minimization problem as
\vspace*{-0.1cm}
\begin{equation}\label{Sum_MSE}
\vspace*{-0.1cm}
\begin{aligned}
&\underset{\left\{ \mathbf{F},\mathbf{A},\mathbf{W},\mathbf{G},\mathbf{S} \right\}}{\min}\sum_{k=1}^K{\mu _k}\left[ \mathrm{tr}\left( \mathbf{W}_k\mathbf{E}_k \right) -\log _2\left| \mathbf{W}_k \right| \right]\\
&\mathrm{s}.\mathrm{t}. \left\| \mathbf{F}_{k,\mathrm{u}} \right\| ^2\leqslant p_k\,\,\forall k=1,\cdots ,K
\end{aligned}
\end{equation}
is equivalent to the weighted sum-rate maximization problem \eqref{Sum_SE}, where $\mathbf{W}_k$ is the weight matrix for UE $k$. Note that \eqref{Sum_MSE} is convex over each optimization variable $\mathbf{F}$, $\mathbf{A}$, $\mathbf{W}$, $\mathbf{G}$, $\mathbf{S}$ but is not jointly convex over all optimization variables. So we can solve \eqref{Sum_MSE} by sequentially fixing four of the five optimization variables $\mathbf{F}$, $\mathbf{A}$, $\mathbf{W}$, $\mathbf{G}$, $\mathbf{S}$ and updating the fifth.\footnote{As for $\mathbf{G}$ and $\mathbf{S}$, if L-MMSE combining structure applied, $\mathbb{E} \left\{ \mathbf{G}_{kk} \right\}$ and $\mathbf{S}_k$ are relevant to $\mathbf{F}_k$ so we should also update them. On the contrary, $\mathbb{E} \{ \mathbf{G}_{kk} \}$ and $\mathbf{S}_k$ with MR combining structure are irrelevant to $\mathbf{F}$ so we only need to update $\mathbb{E} \{ \mathbf{G}_{kl}\mathbf{\bar{F}}_{l,\mathrm{u}}\mathbf{G}_{kl}^{H} \} $.}

Note that optimal $\mathbf{W}_k$ for \eqref{Sum_MSE} is $\mathbf{W}_{k}^{\mathrm{opt}}=\mathbf{E}_{k}^{-1}$,  which can be easily derived through the first order optimality condition for $\mathbf{W}_k$. And the update of $\mathbf{A}_k$ and $\mathbf{E}_k$ are given by the optimal LSFD structure \eqref{Optimal_LSFD} and MSE matrix with optimal LSFD structure \eqref{MMSE_MSE_Matrix}. Substituting $\mathbf{A}_{k}^{\mathrm{opt}}$ and $\mathbf{W}_{k}^{\mathrm{opt}}$ for all UEs in \eqref{Sum_MSE}, we obtain the equivalent optimization problem:
\vspace*{-0.1cm}
\begin{equation}\label{SE_MSE}
\vspace*{-0.1cm}
\begin{aligned}
&\underset{\left\{ \mathbf{F},\mathbf{G},\mathbf{S} \right\}}{\min}\sum_{k=1}^K{\mu _k\log _2\left| \left( \mathbf{E}_{k}^{\mathrm{opt}} \right) ^{-1} \right|}\\
&\mathrm{s}.\mathrm{t}. \left\| \mathbf{F}_{k,\mathrm{u}} \right\| ^2\leqslant p_k\,\,\forall k=1,\cdots ,K
\end{aligned}
\end{equation}
which is a well-known relationship between $\mathbf{E}_k^{\mathrm{opt}}$ and $\mathrm{SE}_k^{\mathrm{opt}}$ and proven in Appendix~\ref{MSE_SE}. Last but not least, fixing other variables, the update of $\mathbf{F}_{k,\mathrm{u}}$ results in the optimization problem as \eqref{F_Problem}, which is a convex quadratic optimization problem. Thus, we can apply classic Lagrange multipliers methods and Karush-Kuhn-Tucker (KKT) conditions to derive an optimal solution. The Lagrange function of \eqref{F_Problem} is given by \eqref{Lagrange_Function}. By applying the first-order optimality condition of \eqref{Lagrange_Function} with respect to each $\mathbf{F}_{k,\mathrm{u}}$ and fixing other optimization variables, we obtain the optimal precoding structure as
\addtocounter{equation}{1}
\addtocounter{equation}{1}
\vspace*{-0.1cm}
\begin{equation}\label{Optimal_F}
\begin{aligned}
\mathbf{F}_{k,\mathrm{u}}^{\mathrm{opt}}=&\mu _k\left( \sum_{l=1}^K{\mu _l\mathbb{E} \left\{ \mathbf{G}_{lk}^{H}\mathbf{A}_l\mathbf{E}_{l}^{-1}\mathbf{A}_{l}^{H}\mathbf{G}_{lk} \right\}}+ \lambda _k\mathbf{I}_N \right) ^{-1}\\
&\times\mathbb{E} \left\{ \mathbf{G}_{kk}^{H} \right\} \mathbf{A}_k\mathbf{E}_{k}^{-1},
\end{aligned}
\end{equation}
where $\lambda _k\geqslant 0$ is the Lagrangian multiplier. According to the KKT condition. $\lambda _k$ and $\mathbf{F}_{k,\mathrm{u}}$ should also satisfy $\| \mathbf{F}_{k,\mathrm{u}} \| ^2\leqslant p_k$ and $\lambda _k( \| \mathbf{F}_{k,\mathrm{u}} \| ^2-p_k ) =0$ with $\lambda _k\geqslant 0$.
Let $\mathbf{F}_{k,\mathrm{u}}( \lambda _k ) $ denote the right-hand side of \eqref{Optimal_F}, when $\sum_{l=1}^K{\mu _l\mathbb{E} \{ \mathbf{G}_{lk}^{H}\mathbf{A}_l\mathbf{E}_{l}^{-1}\mathbf{A}_{l}^{H}\mathbf{G}_{lk} \}}$ is invertible and $\mathrm{tr}\left[ \mathbf{F}_{k,\mathrm{u}}( 0 ) \mathbf{F}_{k,\mathrm{u}}( 0 \right) ^H ] \leqslant p_k$, then $\mathbf{F}_{k,\mathrm{u}}^{\mathrm{opt}}=\mathbf{F}_{k,\mathrm{u}}\left( 0 \right) $, otherwise we must have $\mathrm{tr}[ \mathbf{F}_{k,\mathrm{u}}( \lambda _k ) \mathbf{F}_{k,\mathrm{u}}( \lambda _k ) ^H ] =p_k$, where $\lambda _k$ can be easily found by a one-dimensional (1-D) bisection algorithm due to the fact that $\mathrm{tr}[ \mathbf{F}_{k,\mathrm{u}}( \lambda _k ) \mathbf{F}_{k,\mathrm{u}}( \lambda _k ) ^H ] $ is a monotonically decreasing function of $\lambda _k$ \cite{5756489}. Moreover, if MR combining $\mathbf{V}_{mk}=\mathbf{\hat{H}}_{mk}$ applied, we can compute expectations in \eqref{Optimal_F} in closed-form as following theorem.
\begin{thm}\label{F_Th_Closed_Form}
For MR combining $\mathbf{V}_{mk}=\mathbf{\hat{H}}_{mk}$, we can derive $\mathbb{E} \{ \mathbf{G}_{kk}^{H} \} =\mathbf{Z}_{k}^{H}$, $\mathbf{A}_k$ and $\mathbf{E}_{k}$ as Theorem~\ref{Th_Closed_Form}. As for $\mathbb{E} \{ \mathbf{G}_{lk}^{H}\mathbf{A}_l\mathbf{E}_{l}^{-1}\mathbf{A}_{l}^{H}\mathbf{G}_{lk} \} $, by applying Lemma~\ref{Lemma1}, the $( i,n )$-th entry of it is $\mathrm{tr}( \mathbf{\bar{A}}_l\mathbb{E} \{ \mathbf{g}_{lk,n}\mathbf{g}_{lk,i}^{H} \} ) $ where $\mathbf{\bar{A}}_l\triangleq \mathbf{A}_l\mathbf{E}_{l}^{-1}\mathbf{A}_{l}^{H}$. The $[ \left( m-1 \right) N+p,\left( m^{\prime}-1 \right) N+p^{\prime}] $-th (or $[o,j]$-th briefly) entry of $\mathbb{E} \{ \mathbf{g}_{lk,n}\mathbf{g}_{lk,i}^{H} \} \in \mathbb{C} ^{MN\times MN}$
is
\vspace*{-0.1cm}
\begin{equation}\label{gg}\notag
\vspace*{-0.1cm}
\begin{aligned}
\mathbb{E} \{ \mathbf{g}_{lk,n}\mathbf{g}_{lk,i}^{H}\} _{oj}=\begin{cases}
	0, \quad l\notin \mathcal{P} _k,m\ne m^{\prime}\\
	\mathrm{tr}( \mathbf{R}_{mk}^{ni}\mathbf{\hat{R}}_{ml}^{p^{\prime}p} ) ,\quad l\notin \mathcal{P} _k,m=m^{\prime}\\
	\mathrm{tr}( \mathbf{\Xi }_{mlk}^{np} ) \mathrm{tr}( \mathbf{\Xi }_{m^{\prime}kl}^{p^{\prime}i} ) , \quad l\in \mathcal{P} _k,m\ne m^{\prime}\\
	\eqref{Term_gg}, \quad l\in \mathcal{P} _k,m=m^{\prime}\\
\end{cases}
\end{aligned}
\end{equation}
where $\mathbf{\Xi }_{mlk}=\tau _p\mathbf{R}_{mk}\mathbf{\tilde{F}}_{k,\mathrm{p}}^{H}\mathbf{\Psi }_{mk}^{-1}\mathbf{\tilde{F}}_{l,\mathrm{p}}\mathbf{R}_{ml}$, $\mathbf{\Xi }_{m^{\prime}kl}=\tau _p\mathbf{R}_{m^{\prime}l}\mathbf{\tilde{F}}_{l,\mathrm{p}}^{H}\mathbf{\Psi }_{m^{\prime}l}^{-1}\mathbf{\tilde{F}}_{k,\mathrm{p}}\mathbf{R}_{m^{\prime}k}$, $\mathbf{S}_{ml}=\mathbf{R}_{ml}\mathbf{\tilde{F}}_{l,\mathrm{p}}^{H}\mathbf{\Psi }_{ml}^{-1}$, $\mathbf{P}_{mlk,\left( 1 \right)}=\tau _p\mathbf{S}_{ml}( \mathbf{\Psi }_{ml}-\tau _p\mathbf{\tilde{F}}_{k,\mathrm{p}}\mathbf{R}_{mk}\mathbf{\tilde{F}}_{k,\mathrm{p}}^{H} ) \mathbf{S}_{ml}^{H}$ and $\mathbf{P}_{mlk,\left( 2 \right)}=\mathbf{S}_{ml}\mathbf{\tilde{F}}_{k,\mathrm{p}}\mathbf{R}_{mk}\mathbf{\tilde{F}}_{k,\mathrm{p}}^{H}\mathbf{S}_{ml}^{H}$.
\end{thm}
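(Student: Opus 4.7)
The plan is to reduce the matrix-valued expectation $\mathbb{E}\{\mathbf{G}_{lk}^{H}\bar{\mathbf{A}}_l\mathbf{G}_{lk}\}$ to scalar Gaussian moments one entry at a time, and then exploit the block structure of $\mathbf{G}_{lk}$ together with independence across APs and the explicit MMSE estimator form. The expressions for $\mathbb{E}\{\mathbf{G}_{kk}^H\}$, $\mathbf{A}_k$, and $\mathbf{E}_k$ follow directly from Theorem~\ref{Th_Closed_Form} and require no new work; all the effort concentrates on $\mathbb{E}\{\mathbf{G}_{lk}^H\mathbf{A}_l\mathbf{E}_l^{-1}\mathbf{A}_l^H\mathbf{G}_{lk}\}$.

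First I would write its $(i,n)$-th entry as $\mathbb{E}\{\mathbf{g}_{lk,i}^H\bar{\mathbf{A}}_l\mathbf{g}_{lk,n}\}$ and apply cyclic trace invariance to rewrite it as $\mathrm{tr}(\bar{\mathbf{A}}_l\mathbb{E}\{\mathbf{g}_{lk,n}\mathbf{g}_{lk,i}^H\})$, which is precisely the identity attributed to Lemma~\ref{Lemma1}. This reduces the problem to computing the $MN\times MN$ matrix $\mathbb{E}\{\mathbf{g}_{lk,n}\mathbf{g}_{lk,i}^H\}$ in closed form. Under MR combining $\mathbf{V}_{ml}=\hat{\mathbf{H}}_{ml}$, the $n$-th column of $\mathbf{G}_{lk}$ is the stacked vector $[\hat{\mathbf{H}}_{1l}^H\mathbf{h}_{1k,n};\ldots;\hat{\mathbf{H}}_{Ml}^H\mathbf{h}_{Mk,n}]$, so the entry indexed by $o=(m-1)N+p$ and $j=(m'-1)N+p'$ equals $\mathbb{E}\{\hat{\mathbf{h}}_{ml,p}^H\mathbf{h}_{mk,n}\mathbf{h}_{m'k,i}^H\hat{\mathbf{h}}_{m'l,p'}\}$, which I would then analyze in the four cases of the theorem statement.

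For $m\neq m'$, independence of the AP-$m$ and AP-$m'$ channels factorizes the expectation into $\mathbb{E}\{\hat{\mathbf{h}}_{ml,p}^H\mathbf{h}_{mk,n}\}\cdot\mathbb{E}\{\mathbf{h}_{m'k,i}^H\hat{\mathbf{h}}_{m'l,p'}\}$: if $l\notin\mathcal{P}_k$, the MMSE estimate $\hat{\mathbf{h}}_{ml}$ is independent of $\mathbf{h}_{mk}$ and both factors vanish; if $l\in\mathcal{P}_k$, substituting the MMSE form $\hat{\mathbf{h}}_{ml}=\mathbf{S}_{ml}\sum_{k'\in\mathcal{P}_l}\tau_p\tilde{\mathbf{F}}_{k',\mathrm{p}}\mathbf{h}_{mk'}+\mathbf{S}_{ml}\mathbf{q}_{m}^{\mathrm{p}}$ isolates the contaminating term proportional to $\mathbf{h}_{mk}$, producing the traces $\mathrm{tr}(\mathbf{\Xi}_{mlk}^{np})\mathrm{tr}(\mathbf{\Xi}_{m'kl}^{p'i})$. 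For $m=m'$ and $l\notin\mathcal{P}_k$, independence of $\hat{\mathbf{H}}_{ml}$ from $\mathbf{H}_{mk}$ reduces the expression via the standard identity $\mathbb{E}\{\mathbf{x}^H\mathbf{A}\mathbf{y}\mathbf{y}^H\mathbf{B}\mathbf{x}\}=\mathrm{tr}(\mathbf{A}\,\mathrm{Cov}(\mathbf{y})\,\mathbf{B}\,\mathrm{Cov}(\mathbf{x}))$ to $\mathrm{tr}(\mathbf{R}_{mk}^{ni}\hat{\mathbf{R}}_{ml}^{p'p})$.

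The main obstacle, and the only nontrivial case, is $m=m'$ with $l\in\mathcal{P}_k$, where $\hat{\mathbf{h}}_{ml,p}$ and $\mathbf{h}_{mk,n}$ are jointly Gaussian but correlated through the shared pilot, so the expectation is a fourth-order Gaussian moment. My plan is to decompose $\hat{\mathbf{h}}_{ml}$ into the piece linearly correlated with $\mathbf{h}_{mk}$, namely $\tau_p\mathbf{S}_{ml}\tilde{\mathbf{F}}_{k,\mathrm{p}}\mathbf{h}_{mk}$, whose second-moment contribution is $\mathbf{P}_{mlk,(2)}$, plus a residual independent of $\mathbf{h}_{mk}$ with covariance $\mathbf{P}_{mlk,(1)}$. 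Representing the residual as $\mathbf{P}_{mlk,(1)}^{1/2}\mathbf{z}_1$ and $\mathbf{h}_{mk}$ as $\mathbf{R}_{mk}^{1/2}\mathbf{z}_2$ with independent standard complex Gaussian vectors $\mathbf{z}_1,\mathbf{z}_2$, the fourth-order moment becomes a monomial in independent standard Gaussians to which Isserlis' theorem applies. The independent-residual pairing gives the $\mathrm{tr}(\mathbf{R}_{mk}^{ni}\mathbf{P}_{mlk,(1)}^{p'p})$ term; the two Wick pairings of the correlated-term block, regrouped into the block submatrices $\tilde{\mathbf{R}}_{mk}^{\cdot\cdot}$ and $\tilde{\mathbf{P}}_{mlk,(2)}^{\cdot\cdot}$, yield the two $\tau_p^2$ trace-product terms in \eqref{Term_gg}. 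Collecting the four cases completes the closed-form evaluation.
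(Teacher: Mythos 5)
Your proposal is correct and follows essentially the route the paper intends: the paper itself invokes Lemma~\ref{Lemma1} for the entrywise trace reduction and otherwise omits the proof, deferring to the same standard machinery you describe (independence across APs, the MMSE decomposition of $\mathbf{\hat{h}}_{ml}$ into a part linearly correlated with $\mathbf{h}_{mk}$ plus an independent residual with covariance $\mathbf{P}_{mlk,(1)}$, and Isserlis/Wick pairings for the pilot-contaminated fourth-order moment) that underlies Theorem~\ref{Th_Closed_Form} and the cited references. Your case analysis and the identification of the $\tau_p^2$ terms with the two Wick pairings of the correlated block are consistent with \eqref{Term_gg}, so no gap.
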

\vspace*{-0.1cm}
Furthermore, an iterative optimization algorithm for $\mathbf{F}_{k,\mathrm{u}}$, called ``iteratively WMMSE (I-WMMSE) algorithm", is summarized in Algorithm \ref{algo:iterative}. The convergence of Algorithm \ref{algo:iterative} is proven in \cite[Theorem 3]{5756489}. We notice that the optimal design of $\mathbf{F}_{k,\mathrm{u}}^{\mathrm{opt}}$ can only be implemented in the CPU, but relies only on channel statistics so it undoubtedly makes sense to improve SE performance. As for the complexity analysis, we ignore bisection steps for $\lambda _k$ in the complexity analysis. The per-iteration complexity of iterative optimization based on L-MMSE combining with the Monte-Carlo method, MR combining with the Monte-Carlo method and MR combining with the closed-form expressions are $\mathcal{O} \left( M^2K^2N^3N_r \right) $, $\mathcal{O} \left( M^2K^2N^3N_r+M^3KN^3 \right) $ and $\mathcal{O} \left( M^3K^2N^5 \right) $, respectively, where $N_r$ is the number of channel realizations.

\begin{algorithm}[t]
\label{algo:iterative}
\caption{I-WMMSE Algorithm for Design of the UL Precoding Matrix}
\KwIn{Channel statistics $\mathbf{\Theta }$ for all possible pairs; UE weights $\mu _k$ for all UEs;}
\KwOut{Optimal precoding matrices $\mathbf{F}_{k,\mathrm{u}}$ for all UEs;}

{\bf Initiation:} $i=0$, $\mathbf{F}_{k,\mathrm{u}}^{\left( 0 \right)}$ and $R^{\left( 0 \right)}=\sum_{k=1}^K{\mu _k\mathrm{SE}_{k}^{\left( 0 \right)}}$ for all UEs; maximum iteration number $I_{\max}$ and threshold $\varepsilon $;\\

\Repeat(){$\left| R^{\left( i \right)}-R^{\left( i-1 \right)} \right|/{R^{\left( i-1 \right)}}\leqslant \varepsilon $ or $i\geqslant I_{\max}$}
{
$i=i+1$\\
Update channel statistics $\mathbf{\Theta }^{\left( i \right)}$, such as $\mathbb{E} \{ \mathbf{G}_{kk}^{\left( i \right)} \} $, $\mathbb{E} \{ \mathbf{G}_{kl}^{\left( i \right)}\mathbf{\bar{F}}_{l,\mathrm{u}}^{\left( i-1 \right)}( \mathbf{G}_{kl}^{\left( i \right)} ) ^H \}$ and $\mathbf{S}_{k}^{\left( i \right)}$;\\
Update optimal LSFD matrix $\mathbf{A}_{k}^{\left( i \right)}$ with $\mathbf{F}_{l,\mathrm{u}}^{\left( i-1 \right)}$ and $\mathbf{\Theta }^{\left( i \right)}$ based on \eqref{Optimal_LSFD};\\
Update optimal MSE matrix $\mathbf{E}_{k}^{\left( i \right)}$ with $\mathbf{F}_{l,\mathrm{u}}^{\left( i-1 \right)}$, $\mathbf{A}_{k}^{\left( i \right)}$ and $\mathbb{E} \{ \mathbf{G}_{kk}^{\left( i \right)} \} $ based on \eqref{MMSE_MSE_Matrix} and update $\mathbf{W}_{k}^{\left( i \right)}$;\\
Update optimal precoding matrix $\mathbf{F}_{k}^{\left( i \right)}$ with $\mathbf{A}_{k}^{\left( i \right)}$, $\mathbf{W}_{k}^{\left( i \right)}$ and $\mathbf{\Theta }^{\left( i \right)}$ based on \eqref{Optimal_F}, where $\lambda _{k}^{\left( i \right)}$ is found by a bisection algorithm; \\
Update sum weighted rate $R^{\left( i \right)}=\sum_{k=1}^K{\mu _k\mathrm{SE}_{k}^{\left( i \right)}}$;\\
}
\end{algorithm}
\section{Numerical Results}
\begin{figure}[t]
\setlength{\abovecaptionskip}{-0.2cm}
\centering
\includegraphics[scale=0.5]{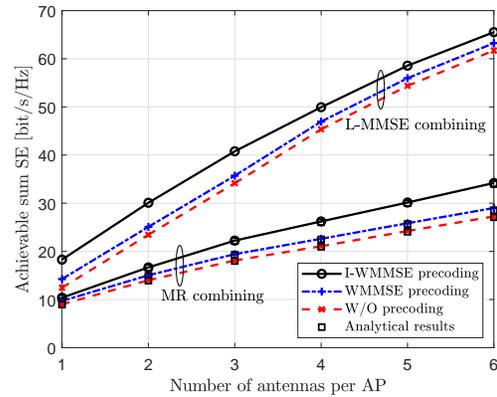}
\caption{Sum SE against the number antennas per AP $L$ over different precoding structures with $M=20$, $K=10$, and $N=4$.
\label{fig1:SE_L}}
\vspace{-0.5cm}
\end{figure}

\begin{figure}[t]
\setlength{\abovecaptionskip}{-0.1cm}
\centering
\includegraphics[scale=0.5]{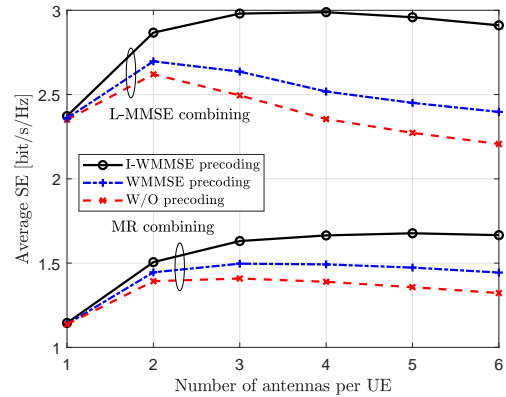}
\caption{Average SE against the number antennas per UE $N$ over different precoding structures with $M=20$, $K=10$, and $L=2$.
\label{fig2:SE_N}}
\vspace{-0.7cm}
\end{figure}

We assume all APs and UEs are uniformly distributed in a $1\times1\,\text{km}^2$ area with a wrap-around scheme. The pathloss and shadow fading are modeled similarly as \cite{8809413}. In practice, $\mathbf{U}_{mk,\mathrm{r}}$, $\mathbf{U}_{mk,\mathrm{t}}$ and $\mathbf{\Omega }_{mk}$ are estimated through measurements \cite{1576533}. But in this paper, we generate them randomly where the coupling matrix $\mathbf{\Omega }_{mk}$ consists of one strong transmit eigendirection capturing dominant power \cite{1459054}. Moreover, we have $\mathbf{F}_{k,\mathrm{p}}=\mathbf{F}_{k,\mathrm{u}}^{\left( 0 \right)}=\sqrt{\frac{p_k}{N}}\mathbf{I}_N$. As for Algorithm \ref{algo:iterative}, balancing the convergence and accuracy, we assume the maximum iteration number $I_{\max}$ and threshold $\mathrm{\varepsilon}$ are $20$ and $5\times 10^{-4}$, weights for all UEs are equal ($\mu _k=1$) without losing generality, respectively. Then, we consider communication with $20\,\text{MHz}$ bandwidth and $\sigma ^2=-94\,\text{dBm}$ noise power. All UEs transmit with $200\,\text{mW}$ power constraint. Each coherence block contains $\tau _c=200$ channel uses and $\tau _p=KN/2$.

We firstly investigate the effect of the number of antennas per AP. Fig. \ref{fig1:SE_L} shows the achievable sum SE as a function of the number of antennas per AP with L-MMSE or MR combining and ``I-WMMSE precoding", ``WMMSE precoding" or ``w/o precoding" \footnote{The ``WMMSE precoding" and ``w/o precoding" scenarios denote that precoding matrices generated by the I-WMMSE algorithm with only single iteration and identity precoding matrices $\mathbf{F}_{k,\mathrm{u}}=\sqrt{\frac{p_k}{N}}\mathbf{I}_N$ are implemented without optimization, respectively.}. We notice that I-WMMSE precoding proposed is an efficient structure to improve the achievable sum SE, even with only single iteration. With MR combining, markers ``$\circ$"  generated by analytical results overlap with the curves generated by simulations, respectively, validating our derived closed-form expressions. Moreover, the performance gap between the I-WMMSE and w/o precoding with L-MMSE combining becomes smaller with the increase of $L$, such as $46.75\%$ and $6.17\%$ SE improvement with $L=1$ and $L=6$, respectively, which implies that L-MMSE combining can use all antennas on each AP to suppress interference and achieve excellent SE performance even without any precoding structure.

Fig. \ref{fig2:SE_N} investigates the average SE as a function of the number of antennas per UE. Note that additional UE antennas may give rise to the SE degradation in the case without precoding structures \cite{194}. With the implementation of I-WMMSE precoding, we notice that UEs can make full use of multiple antennas and achieve excellent SE performance even with large $N$. Moreover, the performance gap between the I-WMMSE precoding and w/o precoding becomes larger with the increase of $N$, such as $9.43\%$ and $31.91\%$ SE improvement with $N=2$ and $N=6$, respectively, over L-MMSE combining, implying that the proposed I-WMMSE precoding is an efficient structure to improve the average SE performance especially with large $N$.

\section{Conclusion}\label{sec:conclusion}

We consider a CF mMIMO system with both APs and UEs equipped with multiple antennas over the  Weichselberger Rayleigh fading channel. A two-layer decoding structure is implemented with MR or L-MMSE combining in each AP (the first layer) and the LSFD method in the CPU (the second layer). Moreover, an UL precoding structure based on an iteratively WMMSE algorithm with only channel statistics is proposed to maximize WSR. Furthermore, we compute achievable SE expressions and optimal precoding structures in novel closed-form with MR combining in the first layer. Finally, numerical results validate our derived closed-form SE expressions and show the I-WMMSE precoding can achieve excellent sum SE performance.

\begin{appendices}
\section{A useful Lemma}
\begin{lemm}\label{Lemma1}
Let $\mathbf{X}\in \mathbb{C} ^{M\times N}$ be a random matrix and $\mathbf{Y}$ is a deterministic $M\times M$ matrix. So $(n,i)$-th element of $\mathbb{E} \{ \mathbf{X}^H\mathbf{YX} \}$ is $\mathrm{tr}\left( \mathbf{Y}\cdot \mathbb{E} \left\{ \mathbf{x}_i\mathbf{x}_{n}^{H} \right\} \right)$
where $\mathbf{x}_i$ and $\mathbf{x}_n$ is the $i$-th and $n$-th column of $\mathbf{X}$, respectively.
\end{lemm}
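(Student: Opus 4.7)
The plan is to reduce the matrix-valued expectation to a scalar identity by expanding the $(n,i)$-th entry of $\mathbf{X}^H\mathbf{Y}\mathbf{X}$ in terms of the columns of $\mathbf{X}$ and then exploiting the cyclic property of the trace together with linearity of expectation.

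First, I would write $\mathbf{X}=[\mathbf{x}_1,\dots,\mathbf{x}_N]$ so that, by the definition of matrix multiplication, the $(n,i)$-th entry of $\mathbf{X}^H\mathbf{Y}\mathbf{X}$ is the scalar $\mathbf{x}_n^H\mathbf{Y}\mathbf{x}_i$. Because any scalar equals its own trace, I can rewrite this as $\mathrm{tr}(\mathbf{x}_n^H\mathbf{Y}\mathbf{x}_i)$, and then invoke the cyclic property $\mathrm{tr}(\mathbf{ABC})=\mathrm{tr}(\mathbf{BCA})$ to move $\mathbf{x}_n^H$ to the right, obtaining $\mathrm{tr}(\mathbf{Y}\mathbf{x}_i\mathbf{x}_n^H)$.

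Second, I would take expectations and use the fact that the trace is a linear functional and $\mathbf{Y}$ is deterministic, so $\mathbb{E}\{\mathrm{tr}(\mathbf{Y}\mathbf{x}_i\mathbf{x}_n^H)\}=\mathrm{tr}(\mathbf{Y}\,\mathbb{E}\{\mathbf{x}_i\mathbf{x}_n^H\})$, which is exactly the claimed identity. The derivation is essentially a one-line manipulation with no genuine obstacle; the only care needed is to make sure the cyclic rotation produces $\mathbb{E}\{\mathbf{x}_i\mathbf{x}_n^H\}$ (with indices in the order prescribed by the lemma) rather than its Hermitian transpose $\mathbb{E}\{\mathbf{x}_n\mathbf{x}_i^H\}$, which is automatic once one commits to rotating $\mathbf{x}_n^H$ past $\mathbf{Y}\mathbf{x}_i$.
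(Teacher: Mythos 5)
Your proof is correct and is exactly the standard one-line argument (entry expansion into $\mathbf{x}_n^H\mathbf{Y}\mathbf{x}_i$, scalar-equals-its-trace, cyclic shift, then linearity of expectation with $\mathbf{Y}$ deterministic) that the paper implicitly relies on; the paper itself states Lemma~\ref{Lemma1} without proof. You also correctly handle the one point worth care, namely that the cyclic rotation yields $\mathbb{E}\{\mathbf{x}_i\mathbf{x}_n^H\}$ in the index order the lemma prescribes.
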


\section{Proof of \eqref{SE_MSE}}\label{MSE_SE}
The conditional MSE matrix for UE $k$ can be written as \eqref{MSE_Matrix}. Based on \cite{04962}, we prove that \eqref{Optimal_LSFD} can also minimize $\mathrm{MSE}_k=\mathrm{tr}\left( \mathbf{E}_k \right)$. With \eqref{Optimal_LSFD} implemented, $\mathbf{E}_k $ is given by \eqref{MMSE_MSE_Matrix}. Then, by applying \cite[Lemma B.3]{8187178}, we derive \eqref{E_Inverse} with $\mathbf{A}=\mathbf{I}_N$, $\mathbf{B}=-\mathbf{F}_{k,\mathrm{u}}^{H}\mathbb{E} \{ \mathbf{G}_{kk}^{H} \}$, $\mathbf{C}=( \sum_{l=1}^K{\mathbb{E} \{ \mathbf{G}_{kl}\mathbf{\bar{F}}_{l,\mathrm{u}}\mathbf{G}_{kl}^{H} \}}+\sigma ^2\mathbf{S}_k ) ^{-1}$ and $\mathbf{D}=\mathbb{E}\{ \mathbf{G}_{kk} \} \mathbf{F}_{k,\mathrm{u}}$, respectively. So we show the equivalence between $\mathrm{SE}_{k}^{\mathrm{opt}}$ and $\log _2| ( \mathbf{E}_{k}^{\mathrm{opt}} ) ^{-1} |$ except from having a constant scaling factor $( 1-{\tau _p}/{\tau _c} )$.

\end{appendices}

\bibliographystyle{IEEEtran}
\bibliography{IEEEabrv,Ref}
\end{document}